\begin{document}

\input epsf

\def\thesection {\arabic{section}}
\newtheorem{definition}{Definition}
\newtheorem{corollary}{Corollary}
\newtheorem{theorem}{Theorem}
\newtheorem{lemma}{Lemma}
\newtheorem{remark}{Remark}
\newtheorem{example}{Example}
\newtheorem{proposition}{Proposition}
\newtheorem{question}{Question}
\newtheorem{conjecture}{Conjecture}

\newcommand{\dsum}{\displaystyle\sum}
\newcommand{\dint}{\displaystyle\int}
\newcommand{\reals}{{\rm I\!R}}
\newcommand{\Prob}{{\rm Prob\,}}
\newcommand{\expectation}{\ensuremath{\mathbb{E}}}
\newcommand{\naturals}{\ensuremath{\mathbb{N}}}
\newcommand{\set}{\ensuremath{\mathcal}}

\title{\huge{Upper Bounds on the Relative Entropy and R\'{e}nyi Divergence as a Function
of Total Variation Distance for Finite Alphabets}}

\author{Igal Sason  \hspace*{6cm} Sergio Verd\'{u}\\
Department of Electrical Engineering \hspace*{2cm} Department of Electrical Engineering\\
\hspace*{-2cm} Technion -- Israel Institute of Technology \hspace*{2.8cm} Princeton University\\
\hspace*{1.5cm} Haifa 32000, Israel \hspace*{3.5cm} Princeton, New Jersey 08544, USA\\
E-mail: \url{sason@ee.technion.ac.il}  \hspace*{2.9cm} E-mail: \url{verdu@princeton.edu}}

\maketitle
\thispagestyle{empty}
\pagestyle{empty}

\begin{abstract}
A new upper bound on the relative entropy is derived as a function of the total
variation distance for probability measures defined on a common finite alphabet.
The bound improves a previously reported bound by Csisz\'{a}r and Talata. It is
further extended to an upper bound on the R\'{e}nyi divergence of an arbitrary
non-negative order (including $\infty$) as a function of the total variation distance.
\end{abstract}

\vspace*{0.1cm}
{\bf{Keywords}}:
Pinsker's inequality, relative entropy, relative information, R\'{e}nyi divergence,
total variation distance.

\thispagestyle{empty}
\pagestyle{empty}

\section{Introduction}
Consider two probability distributions $P$ and $Q$ defined on a common
measurable space $(\set{A}, \mathscr{F})$.
The Csisz\'{a}r-Kemperman-Kullback-Pinsker inequality (a.k.a. Pinsker's inequality)
states that
\begin{align}
\label{eq: Pinsker}
\tfrac12 \, |P-Q|^2 \, \log e \leq D(P \| Q)
\end{align}
where
\begin{align} \label{eq: relative entropy}
D(P \| Q) = \expectation_P\left[ \log \frac{\text{d}P}{\text{d}Q} \right]
= \int_{\set{A}} \text{d}P(a) \, \log \frac{\text{d}P}{\text{d}Q} \, (a)
\end{align}
designates the relative entropy (a.k.a. the Kullback-Leibler divergence)
from $P$ to $Q$, and
\begin{align} \label{eq: TV distance}
|P-Q| = 2 \, \sup_{\mathcal{F} \in \mathscr{F}} |P(\mathcal{F}) - Q(\mathcal{F})|
\end{align}
is the total variation distance between $P$ and $Q$.

A ``reverse Pinsker inequality" providing an upper bound on the relative
entropy in terms of the total variation distance does not exist in general
since we can find distributions that are arbitrarily close in total variation
but with arbitrarily high relative entropy.
Nevertheless, it is possible to introduce constraints under which such
reverse Pinsker inequalities can be obtained.
In the case where the probability measures $P$ and $Q$ are defined
on a common discrete (i.e., finite or countable) set~$\set{A}$,
\begin{align}
\label{eq: relative entropy - discrete}
& D(P \| Q) = \sum_{a \in \set{A}} P(a) \, \log \frac{P(a)}{Q(a)}, \\
\label{eq: total variation - discrete}
& |P-Q| = \sum_{a \in \set{A}} \bigl|P(a) - Q(a)\bigr|.
\end{align}

One of the implications of \eqref{eq: Pinsker} is that convergence
in relative entropy implies convergence in total variation distance. The total
variation distance is bounded $|P-Q| \leq 2$, whereas the relative
entropy is an unbounded information measure.

Improved versions of Pinsker's inequality were studied, e.g., in \cite{FedotovHT_IT03},
\cite{Gilardoni10}, \cite{OrdentlichW_IT2005}, \cite{ReidW11}, \cite{Vajda_IT1970}.

A ``reverse Pinsker inequality" providing an upper bound on the relative
entropy in terms of the total variation distance does not exist in general
since we can find distributions that are arbitrarily close in total variation
but with arbitrarily high relative entropy.
Nevertheless, it is possible to introduce constraints under which such
reverse Pinsker inequalities can be obtained.
In the case of a finite alphabet $\set{A}$, Csisz\'ar and Talata
\cite[p.~1012]{CsiszarT_IT06} show that
\begin{align}  \label{eq: CsTa}
D(P \| Q) \leq \left(\frac{\log e}{Q_{\min}} \right)
\cdot |P-Q|^2,
\end{align}
where
\begin{align} \label{eq: Q_min}
Q_{\min} \triangleq \min_{a \in \set{A}} Q(a).
\end{align}

Recent applications of \eqref{eq: CsTa} can be found in \cite[Appendix~D]{KostinaV15} and
\cite[Lemma~7]{TomamichelT_IT13} for the analysis of the third-order asymptotics
of the discrete memoryless channel with or without cost constraints.

In addition to $Q_{\min}$ in \eqref{eq: Q_min}, the bounds in this paper involve
\begin{align}
\label{eq: beta1}
& \beta_1 = \min_{a\in \set{A}} \frac{Q(a)}{P(a)}, \\
\label{eq: beta2}
& \beta_2 = \min_{a\in \set{A}} \frac{P(a)}{Q(a)}
\end{align}
so, $\beta_1, \beta_2 \in [0,1]$.

In this paper, Section~\ref{section: Reksnip} derives a reverse Pinsker inequality
for probability measures defined on a common finite set, improving the bound in \eqref{eq: CsTa}.
The utility of this inequality is studied in Section~\ref{sec: applications}, and it
is extended in Section~\ref{section: RD} to R\'{e}nyi divergences
of an arbitrary non-negative order.

\section{A New Reverse Pinsker Inequality for Distributions on a Finite Set}
\label{section: Reksnip}
The present section introduces a strengthened version of \eqref{eq: CsTa},
followed by some remarks and an example.

\subsection{Main Result and Proof}
\begin{theorem} \label{thm: reksnip - improved}
{\em Let $P$ and $Q$ be probability measures defined on a common finite set $\set{A}$, and assume
that $Q$ is strictly positive on $\set{A}$. Then, the following inequality holds:
\begin{align}
\label{eq: UB-RE-FS1}
D(P \| Q) & \leq \log \left(1 + \frac{|P-Q|^2}{2 Q_{\min}} \right) - \frac{\beta_2 \log e}{2} \cdot |P-Q|^2 \\
\label{eq: UB-RE-FS2}
& \leq \log \left(1 + \frac{|P-Q|^2}{2 Q_{\min}} \right)
\end{align}
where $Q_{\min}$ and $\beta_2$ are given in \eqref{eq: Q_min} and
\eqref{eq: beta2}, respectively.}
\end{theorem}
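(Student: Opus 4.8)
The plan is to bound $D(P\|Q)$ by passing through the chi-squared divergence $\chi^2(P\|Q) = \sum_{a\in\set{A}} \frac{(P(a)-Q(a))^2}{Q(a)}$, which is convenient because $\expectation_P\!\left[\tfrac{P}{Q}\right] = \sum_{a\in\set{A}} \frac{P(a)^2}{Q(a)} = 1 + \chi^2(P\|Q)$. First observe that \eqref{eq: UB-RE-FS2} is an immediate consequence of \eqref{eq: UB-RE-FS1}, since $\beta_2 \geq 0$ makes the subtracted term nonnegative; so the whole task is to prove \eqref{eq: UB-RE-FS1}.

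For the logarithmic term I would argue in two steps. Since $\log$ is concave, Jensen's inequality applied to the random variable $\tfrac{P}{Q}$ under $P$ gives
\begin{align}
D(P\|Q) = \expectation_P\!\left[\log\tfrac{P}{Q}\right] \leq \log \expectation_P\!\left[\tfrac{P}{Q}\right] = \log\bigl(1+\chi^2(P\|Q)\bigr). \nonumber
\end{align}
It then remains to show $\chi^2(P\|Q) \leq \frac{|P-Q|^2}{2\,Q_{\min}}$. Splitting $\set{A}$ into $\set{A}^+ = \{a: P(a) > Q(a)\}$ and its complement, each of the two partial sums $\sum_{\set{A}^+}(P(a)-Q(a))$ and $\sum_{\set{A}\setminus\set{A}^+}(Q(a)-P(a))$ equals $\tfrac12 |P-Q|$; bounding a sum of squares of nonnegative terms by the square of their sum on each part yields $\sum_{a}(P(a)-Q(a))^2 \leq \tfrac12 |P-Q|^2$, and dividing by $Q_{\min}\leq Q(a)$ gives the claim. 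Together these two steps already establish \eqref{eq: UB-RE-FS2}.

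The substantive part is the extra $-\tfrac12 \beta_2 (\log e)\,|P-Q|^2$ term, i.e.\ sharpening the Jensen step into
\begin{align}
\log\bigl(1+\chi^2(P\|Q)\bigr) - D(P\|Q) \;\geq\; \tfrac12\,\beta_2\,(\log e)\,|P-Q|^2. \nonumber
\end{align}
The natural device is the tilted measure $Q^\ast(a) = \frac{P(a)^2/Q(a)}{1+\chi^2(P\|Q)}$, for which a direct computation gives $\log\bigl(1+\chi^2(P\|Q)\bigr) - D(P\|Q) = D(P\|Q^\ast)$, so the left-hand side above is a genuine relative entropy and is in particular nonnegative. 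The difficulty is that invoking Pinsker's inequality \eqref{eq: Pinsker} here only yields $\tfrac12(\log e)|P-Q^\ast|^2$, and $|P-Q^\ast|$ can be far smaller than $\sqrt{\beta_2}\,|P-Q|$, so one cannot merely compare the two total variations.

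I therefore expect the crux to be a dedicated quadratic lower bound on this Jensen gap whose coefficient is exactly $\beta_2 = \min_a P(a)/Q(a)$. Since $|P-Q|^2 \leq \chi^2(P\|Q)$ by Cauchy--Schwarz, it suffices to prove the cleaner statement $\sum_{a} P(a)\,\ln\frac{1+\chi^2(P\|Q)}{P(a)/Q(a)} \geq \tfrac12\,\beta_2\,\chi^2(P\|Q)$, i.e.\ a bound on the gap in terms of $\chi^2$ itself. This is the step I expect to be the main obstacle: ordinary Jensen-gap or Taylor estimates produce the \emph{largest} likelihood ratio rather than the smallest, so the argument must genuinely exploit $P(a)/Q(a) \geq \beta_2$ together with the normalization $\expectation_Q[P/Q]=1$. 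I would attack it either through a scalar inequality for $x \mapsto (1+x)\log(1+x)$ that is tight as $P\to Q$, or by reducing to an extremal two-mass configuration, and I would sanity-check tightness in the small-$|P-Q|$ regime, where both sides behave like $\tfrac12(\log e)\,\chi^2(P\|Q)$.
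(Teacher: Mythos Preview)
Your setup is right and matches the paper: both arguments combine the upper bound $\chi^2(P\|Q)\le |P-Q|^2/(2Q_{\min})$ with a lower bound on $\chi^2$ in terms of $D(P\|Q)$, and the only real issue is sharpening the Jensen step by the extra $\tfrac12\beta_2(\log e)\,|P-Q|^2$. But from that point on your plan goes astray.

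Your ``cleaner'' sufficient condition, namely $\log(1+\chi^2(P\|Q))-D(P\|Q)\ge \tfrac12\,\beta_2\,\chi^2(P\|Q)$ (in nats), is in fact \emph{false}. Take $P=(0.99,0.01)$, $Q=(0.01,0.99)$: then $\beta_2=1/99$, $\chi^2(P\|Q)\approx 97$, the Jensen gap is about $0.08$, while $\tfrac12\beta_2\chi^2\approx 0.49$. So replacing $|P-Q|^2$ by $\chi^2$ via Cauchy--Schwarz is a strengthening you cannot afford; the inequality you need is genuinely about $|P-Q|^2$, not about $\chi^2$. Your tilted-measure identity $\log(1+\chi^2)-D(P\|Q)=D(P\|Q^\ast)$ is correct, but as you already suspected it does not lead to the right constant.

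The missing idea is that the Jensen gap is bounded below not by a function of $|P-Q^\ast|$ but by a multiple of the \emph{reverse} relative entropy $D(Q\|P)$. A refined Jensen inequality (Dragomir's bound on the normalized Jensen functional) gives
\[
\log\bigl(1+\chi^2(P\|Q)\bigr)-D(P\|Q)\;\ge\;\Bigl(\min_a \tfrac{P(a)}{Q(a)}\Bigr)\,D(Q\|P)\;=\;\beta_2\,D(Q\|P),
\]
and then Pinsker's inequality applied to $D(Q\|P)$ (not to $D(P\|Q^\ast)$) yields $\beta_2\,D(Q\|P)\ge \tfrac12\beta_2(\log e)|P-Q|^2$. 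This is exactly the route the paper takes, and it is the step your proposal does not reach.
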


\begin{proof}
Theorem~\ref{thm: reksnip - improved} is proved by obtaining upper and lower bounds on
the $\chi^2$-divergence from $P$ to~$Q$
\begin{align} \label{eq: chi-square}
\chi^2(P\|Q) \triangleq \sum_{a \in \set{A}} \frac{(P(a)-Q(a))^2}{Q(a)}.
\end{align}
A lower bound follows by invoking Jensen's inequality
\begin{align}
\label{eq1: lb chi-square divergence}
\chi^2(P\|Q) & = \sum_{a \in \set{A}} \frac{P(a)^2}{Q(a)} - 1 \\
\label{eq2: lb chi-square divergence}
& = \sum_{a \in \set{A}} P(a) \, \exp\left( \log \frac{P(a)}{Q(a)} \right) - 1 \\
\label{eq3: lb chi-square divergence}
& \geq \exp\left(\sum_{a \in \set{A}} P(a) \, \log \frac{P(a)}{Q(a)} \right) - 1 \\[0.1cm]
\label{eq4: lb chi-square divergence}
& = \exp\bigl( D(P \| Q) \bigr) - 1.
\end{align}
Alternatively, \eqref{eq4: lb chi-square divergence} can be obtained by combining the equality
\begin{align}
\chi^2(P\|Q) = \exp\bigl(D_2(P\|Q)\bigr) - 1
\end{align}
with the monotonicity of the R\'{e}nyi divergence $D_{\alpha}(P\|Q)$ in $\alpha$, which implies
that $D_2(P\|Q) \geq D(P\|Q)$.

A refined version of \eqref{eq4: lb chi-square divergence} is derived in the following. The starting
point is a refined version of Jensen's inequality in \cite[Lemma~1]{ISSV15}, generalizing a result from
\cite[Theorem~1]{Dragomir06}), which leads to (see \cite[Theorem~7]{ISSV15})
\begin{align}
& \min_{a \in \set{A}} \frac{P(a)}{Q(a)} \cdot D(Q\|P) \nonumber \\[0.1cm]
\label{eq0: re and chi^2}
& \leq \log \bigl( 1 + \chi^2(P\|Q) \bigr) - D(P\|Q) \\[0.1cm]
\label{eq: re and chi^2}
& \leq \max_{a \in \set{A}} \frac{P(a)}{Q(a)} \cdot D(Q\|P).
\end{align}
From \eqref{eq: re and chi^2}
and the definition of $\beta_2$ in \eqref{eq: beta2}, we have
\begin{align}
& \chi^2(P\|Q) \nonumber \\[0.1cm]
\label{eq1: refined lb chi^2 divergence}
& \geq \exp \Bigl( D(P\|Q) + \beta_2 \, D(Q \|P) \Bigr) - 1 \\[0.1cm]
\label{eq2: refined lb chi^2 divergence}
& \geq \exp \left( D(P \|Q) + \frac{\beta_2 \, \log e}{2} \cdot |P-Q|^2 \right) - 1
\end{align}
where \eqref{eq1: refined lb chi^2 divergence} follows from
\eqref{eq0: re and chi^2} and the definition of $\beta_2$ in \eqref{eq: beta2},
and \eqref{eq2: refined lb chi^2 divergence} follows from Pinsker's inequality
\eqref{eq: Pinsker}. Note that the lower bound in
\eqref{eq2: refined lb chi^2 divergence} refines the lower bound in
\eqref{eq4: lb chi-square divergence} since $\beta_2 \in [0,1]$.

An upper bound on $\chi^2(P\|Q)$ is derived as follows:
\begin{align}
\chi^2(P\|Q) & = \sum_{a \in \set{A}} \frac{(P(a)-Q(a))^2}{Q(a)} \nonumber \\[0.1cm]
\label{eq1: ub1}
& \leq \frac{\sum_{a \in \set{A}} \bigl(P(a)-Q(a)\bigr)^2}{Q_{\min}} \\[0.1cm]
\label{eq2: ub1}
& = \frac{|P-Q|}{Q_{\min}} \cdot \max_{a \in \set{A}} |P(a)-Q(a)|
\end{align}
and, from \eqref{eq: TV distance},
\begin{align}
|P-Q| \geq 2 \max_{a \in \set{A}} |P(a)-Q(a)|.
\label{eq: ub2}
\end{align}
Combining \eqref{eq2: ub1} and \eqref{eq: ub2} yields
\begin{align}
\chi^2(P\|Q) \leq \frac{|P-Q|^2}{2Q_{\min}}.
\label{eq: ub chi-square divergence}
\end{align}
Finally, \eqref{eq: UB-RE-FS1} follows by combining the upper and lower bounds
on the $\chi^2$-divergence in \eqref{eq2: refined lb chi^2 divergence} and
\eqref{eq: ub chi-square divergence}.
\end{proof}

\begin{remark}
{\em It is easy to check that Theorem~\ref{thm: reksnip - improved} strengthens the
bound by Csisz\'ar and Talata in \eqref{eq: CsTa} by at least a factor of~2
since upper bounding the logarithm in \eqref{eq: UB-RE-FS1} gives
\begin{align} \label{betterthanCT}
D(P \| Q) \leq \frac{(1-\beta_2 \, Q_{\min}) \log e}{2 Q_{\min}} \cdot {|P-Q|^2}.
\end{align}}
\end{remark}

\vspace*{0.2cm}
In the finite-alphabet case, we can obtain another upper bound on $D(P\|Q)$ as a function of the $\ell_2$ norm
$\| P - Q \|_2$:
\begin{align}  \label{eq: UB-RE-FS}
D(P \| Q) \leq \log \left(1 + \frac{\|P-Q\|_2^2}{Q_{\min}} \right) - \frac{\beta_2 \log e}{2} \cdot \|P-Q\|_2^2
\end{align}
which follows by combining \eqref{eq2: refined lb chi^2 divergence}, \eqref{eq1: ub1}, and
$\|P-Q\|_2 \leq |P-Q|$. Using the inequality $\log (1 + x) \leq x \log e$ for $x \geq 0$
in the right side of \eqref{eq: UB-RE-FS}, and also loosening this bound by ignoring the
term $\frac{\beta_2 \log e}{2} \cdot \|P-Q\|_2^2$, we recover the bound
\begin{align}  \label{eq: UB-RE-FS-looser}
D(P \| Q) \leq \frac{\|P-Q\|_2^2 \, \log e}{Q_{\min}}
\end{align}
which appears in the proof of Property~4 of \cite[Lemma~7]{TomamichelT_IT13},
and also used in \cite[(174)]{KostinaV15}.

\vspace*{0.1cm}
\begin{remark}
{\em The lower bounds on the $\chi^2$-divergence in \eqref{eq4: lb chi-square divergence}
and \eqref{eq2: refined lb chi^2 divergence} improve the one in \cite[Lemma~6.3]{CsiszarT_IT06}
which states that $D(P \| Q) \leq \chi^2(P\|Q) \log e$.}
\end{remark}

\vspace*{0.1cm}
\begin{remark}
{\em Reverse Pinsker inequalities have been also derived in quantum information theory
(\cite{AE1, AE2}), providing upper bounds on the relative entropy of two quantum
states as a function of the trace norm distance when the minimal eigenvalues of the
states are positive (c.f. \cite[Theorem~6]{AE1} and \cite[Theorem~1]{AE2}).
These type of bounds are akin to the weakend form in \eqref{eq: UB-RE-FS2}. When the variational
distance is much smaller than the minimal eigenvalue (see \cite[Eq.~(57)]{AE1}),
the latter bounds have a quadratic scaling in this distance, similarly to
\eqref{eq: UB-RE-FS2}; they are also inversely proportional to the minimal eigenvalue,
similarly to the dependence of \eqref{eq: UB-RE-FS2} in $Q_{\min}$.}
\end{remark}

\section{Applications of Theorem~\ref{thm: reksnip - improved}}
\label{sec: applications}

\subsection{The Exponential Decay of the Probability for a Non-Typical Sequence}
\label{subsection: typicality}

To exemplify the utility of Theorem~\ref{thm: reksnip - improved}, we bound the function
\begin{align}\label{def:ldeltaq}
L_{\delta} ( Q ) = \min_{P \not \in \mathcal{T}_\delta (Q )} D(P\|Q)
\end{align}
where we have denoted the subset of probability measures on $ (\set{A}, \mathscr{F})$
which are $\delta$-close to $Q$ as
\begin{align}
\mathcal{T}_\delta (Q ) = \Bigl\{P
\colon \forall \, a \in \set{A}, \; \; |P(a)-Q(a)| \leq \delta \, Q(a) \Bigr\}
\end{align}
Note that $(a_1, \ldots , a_n )$ is strongly $\delta$-typical according to $Q$ if its
empirical distribution belongs to $\mathcal{T}_\delta (Q )$. According to Sanov's theorem
(e.g. \cite[Theorem~11.4.1]{Cover_Thomas}), if the random variables are independent
distributed according to $Q$, then the probability that
$(Y_1, \ldots , Y_n)$,  is not $\delta$-typical
vanishes exponentially with exponent $L_{\delta} ( Q )$.
\par
To state the next result, we invoke the following notions from \cite{OrdentlichW_IT2005}.
Given a probability measure $Q$, its \textit{balance coefficient} is given by
\begin{align} \label{balanceOW}
\beta_Q = \inf_{A\in \mathscr{F}\colon Q(A) \geq \frac12} Q(A).
\end{align}
The function $\phi\colon(0, \tfrac12] \to [\tfrac12 \log e, \infty)$ is given by
\begin{align}
\label{eq: phi refined pinsker}
\phi(p) = \left\{
\begin{array}{ll}
\frac1{4(1-2p)} \, \log \left( \frac{1-p}{p} \right), & p \in
\bigl(0, \tfrac12 \bigr), \\[0.2cm]
\tfrac12 \log e, & p=\tfrac12 .
\end{array}
\right.
\end{align}
\begin{theorem}
{\em If $Q_{\min}>0$, then
\begin{align}
\label{potalo}
\phi(1 - \beta_Q) \, Q_{\min}^2 \, \delta^2
&\leq L_{\delta} ( Q ) \\
&\leq \log \left(1 + 2 Q_{\min} \, \delta^2 \right)
\label{potaup}
\end{align}
where \eqref{potaup} holds if $\delta \leq Q_{\min}^{-1} -1$.}
\end{theorem}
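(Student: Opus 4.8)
The plan is to establish the two bounds separately: the lower bound \eqref{potalo} from a distribution-dependent refinement of Pinsker's inequality, and the upper bound \eqref{potaup} by exhibiting a single near-minimizing distribution and invoking Theorem~\ref{thm: reksnip - improved}.

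For \eqref{potalo}, I would note that any $P \notin \mathcal{T}_\delta(Q)$ must violate the defining constraint at some symbol $a$, so that $|P(a)-Q(a)| > \delta \, Q(a) \geq \delta \, Q_{\min}$, the last step using $Q(a) \geq Q_{\min}$. Since a single coordinate cannot exceed the full $\ell_1$ discrepancy, $|P-Q| = \sum_{b \in \set{A}} |P(b)-Q(b)| \geq |P(a)-Q(a)| > \delta \, Q_{\min}$. I would then apply the refined Pinsker inequality of Ordentlich and Weinberger \cite{OrdentlichW_IT2005}, $D(P\|Q) \geq \phi(1-\beta_Q) \, |P-Q|^2$, with $\phi$ and $\beta_Q$ as in \eqref{eq: phi refined pinsker} and \eqref{balanceOW} (here $\beta_Q \in [\tfrac12,1]$, so $1-\beta_Q$ lies in the domain of $\phi$); this reduces to Pinsker's inequality when $\beta_Q = \tfrac12$ and is never weaker since $\phi \geq \tfrac12 \log e$. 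Combining the two displays gives $D(P\|Q) > \phi(1-\beta_Q) \, Q_{\min}^2 \, \delta^2$ for every admissible $P$, and taking the infimum over $P \notin \mathcal{T}_\delta(Q)$ yields \eqref{potalo}.

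For \eqref{potaup}, I would construct an escaping distribution of minimal total variation. Let $a^\star$ attain $Q(a^\star) = Q_{\min}$, put $P(a^\star) = (1+\delta) \, Q_{\min}$, and remove the excess mass $\delta \, Q_{\min}$ from the remaining symbols in proportion to $Q$, i.e. $P(b) = Q(b)\bigl(1 - \tfrac{\delta Q_{\min}}{1-Q_{\min}}\bigr)$ for $b \neq a^\star$. The hypothesis $\delta \leq Q_{\min}^{-1}-1$ is exactly the condition $\delta \, Q_{\min} \leq 1 - Q_{\min}$ that makes $P$ a bona fide probability measure, since it guarantees both $P(a^\star) \leq 1$ and $P(b) \geq 0$. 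This $P$ sits on the boundary of $\mathcal{T}_\delta(Q)$, and a one-line computation gives $|P-Q| = 2\delta \, Q_{\min}$; feeding this into the weakened bound \eqref{eq: UB-RE-FS2} of Theorem~\ref{thm: reksnip - improved} produces $D(P\|Q) \leq \log\bigl(1 + \tfrac{(2\delta Q_{\min})^2}{2 Q_{\min}}\bigr) = \log(1 + 2 Q_{\min} \, \delta^2)$.

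The one point requiring care is that $P$ lies on the boundary of the closed set $\mathcal{T}_\delta(Q)$ rather than strictly outside it, so it is not itself feasible for the minimization defining $L_\delta(Q)$. I would resolve this by a limiting argument: taking $P_\epsilon(a^\star) = (1+\delta+\epsilon)\,Q_{\min}$ with the same proportional redistribution gives, for every $\epsilon > 0$, a feasible $P_\epsilon \notin \mathcal{T}_\delta(Q)$ with $D(P_\epsilon\|Q) \to D(P\|Q)$ as $\epsilon \downarrow 0$ by continuity of $D(\cdot\|Q)$, so that $L_\delta(Q) \leq \log(1 + 2 Q_{\min}\,\delta^2)$. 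I expect this boundary/continuity bookkeeping, rather than any substantive estimate, to be the only real obstacle: the heart of the argument is simply pairing the single violated constraint with refined Pinsker on one side and with Theorem~\ref{thm: reksnip - improved} on the other.
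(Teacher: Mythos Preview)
Your proposal is correct and follows essentially the same approach as the paper: both sides pair the bound $|P-Q| > \delta Q_{\min}$ (for any $P \notin \mathcal{T}_\delta(Q)$) with the Ordentlich--Weinberger refinement of Pinsker for \eqref{potalo}, and the fact that $\inf_{P \notin \mathcal{T}_\delta(Q)} |P-Q| = 2\delta Q_{\min}$ with \eqref{eq: UB-RE-FS2} for \eqref{potaup}; you are simply more explicit about constructing the near-minimizer and handling the boundary via a limit. One small correction: under $Q_{\min}>0$ you should note $\beta_Q \leq 1 - Q_{\min} < 1$ (not merely $\beta_Q \leq 1$), which is what actually places $1-\beta_Q$ in the domain $(0,\tfrac12]$ of $\phi$.
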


\vspace*{0.2cm}

\begin{proof}
Ordentlich and Weinberger \cite[Section~4]{OrdentlichW_IT2005}
show the refinement of Pinsker's inequality:
\begin{align}  \label{eq: OrdentlichW}
\phi(1-\beta_Q) \; |P-Q|^2 \leq D(P \| Q).
\end{align}
Note that if $Q_{\min} > 0$ then $\beta_Q \leq 1 - Q_{\min} < 1$, and
therefore $\phi(1-\beta_Q)$ is well defined and finite.
If $P \not\in \mathcal{T}_\delta (Q )$ the simple bound
\begin{align}
|P - Q| > \delta Q_{\min}
\end{align}
together with \eqref{eq: OrdentlichW} yields \eqref{potalo}.
\par
The upper bound \eqref{potaup} follows from \eqref{eq: UB-RE-FS2}
and the fact that if $\delta \leq Q_{\min}^{-1}-1$, then
\begin{align}
\min_{P \not\in \mathcal{T}_\delta (Q )} | P - Q | = 2 \delta Q_{\min}.
\end{align}
\end{proof}

If $\delta \leq Q_{\min}^{-1} - 1$, the ratio between the upper and lower bounds
in \eqref{potaup}, satisfies
\begin{align}  \label{eq: exponents' ratio}
 \frac1{Q_{\min}} \cdot \frac{\log e}{2 \, \phi(1-\beta_Q)} \cdot
\frac{\log \left(1 + 2 Q_{\min} \, \delta^2 \right)}{\tfrac12 \log e
\; Q_{\min} \, \delta^2} \leq \frac{4}{Q_{\min}}
\end{align}
where \eqref{eq: exponents' ratio} follows from the fact that its second
and third factors are less than or equal to~1 and~4, respectively. Note
that the bounds in \eqref{potalo} and \eqref{potaup} scale like $\delta^2$
for $\delta \approx 0$.

\subsection{Distance from Equiprobable}
\label{subs: equiprobable}

If $P$ is a distribution on a finite set $\set{A}$, $H ( P )$ gauges the ``distance" from
$U$, the equiprobable distribution, since
\begin{align} \label{eq: RE-equiprobable}
H ( P ) = \log | \set{A} | -  D(P \| U).
\end{align}
Thus, it is of interest to explore the relationship between $H ( P ) $ and $|P-U|$.
Particularizing \eqref{eq: Pinsker},  \cite[(2.2)]{BretagnolleH79} (see also \cite[pp.~30--31]{Vapnik98}),
and \eqref{eq: UB-RE-FS2} we obtain
\begin{align}
\label{eq: 1st ubtv-uniform}
|P-U| &\leq \sqrt{ \frac{2}{\log e} \cdot \bigl( \log | \set{A} | - H(P) \bigr) }, \\
\label{eq: 2nd ubtv-uniform}
|P-U| &\leq 2 \sqrt{1 - \frac1{|\set{A}|} \cdot \exp\bigl( H(P) \bigr)}, \\
\label{eq: lbtv-uniform}
|P-U| &\geq \sqrt{ 2 \left( \exp\bigl( -H(P) \bigr) - \frac1{|\set{A}|} \right) },
\end{align}
respectively.

\begin{figure}[here!]
\includegraphics[width=8.2cm]{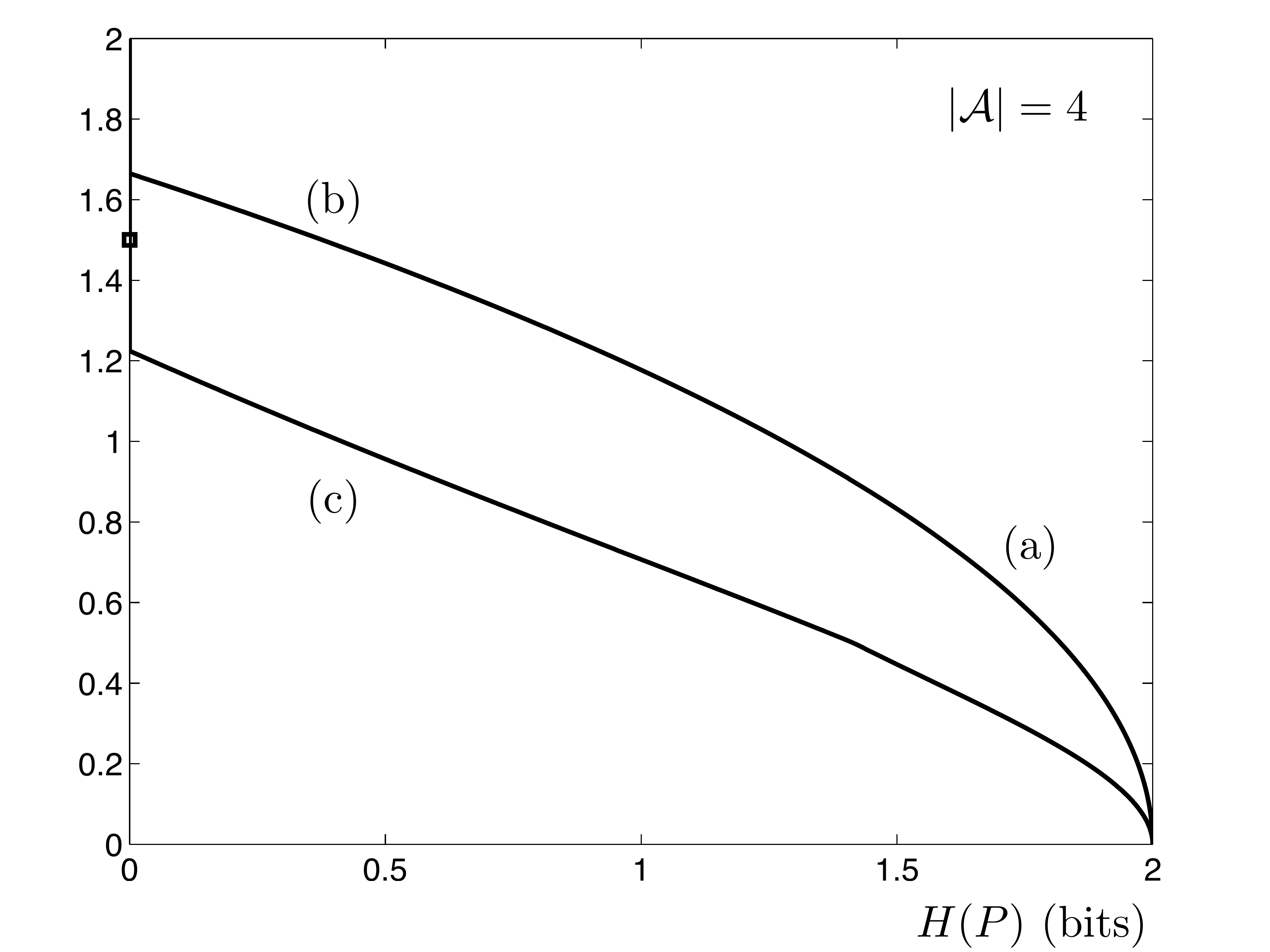}
\includegraphics[width=8.2cm]{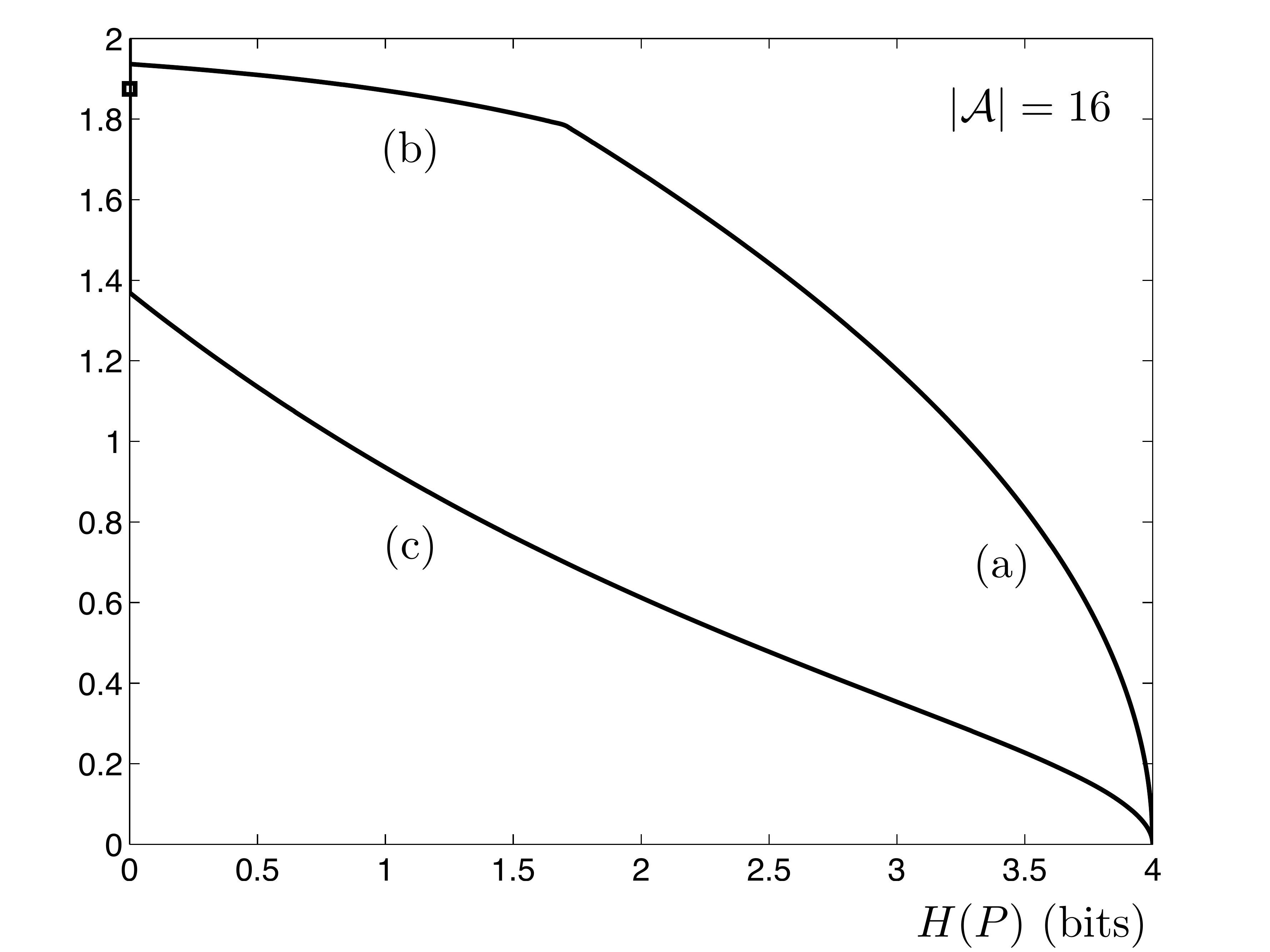}
\caption{\label{figure:equiprobable}
Bounds on $|P-U|$ as a function of $H(P)$ for $|\set{A}| = 4$, and
$|\set{A}| = 16$. The point $(H(P), |P-U|)=(0, 2(1-|\set{A}|^{-1}) )$
is depicted on the $y$-axis. In the curves of the two plots, the bounds
(a), (b) and (c) refer, respectively, to \eqref{eq: 1st ubtv-uniform},
\eqref{eq: 2nd ubtv-uniform} and \eqref{eq: lbtv-uniform}.}
\end{figure}

The bounds in \eqref{eq: 1st ubtv-uniform}--\eqref{eq: lbtv-uniform} are illustrated
for $|\set{A}| = 4, 16$ in Figure~\ref{figure:equiprobable}.
For $H(P)=0$, $|P - U| = 2(1-|\set{A}|^{-1} )$ is shown for reference in Figure~\ref{figure:equiprobable};
as the cardinality of the alphabet increases,
the gap between $|P-U|$ and its upper bound is reduced (and this gap decays
asymptotically to zero).
\par
Results on the more general problem of finding bounds on $|H( P ) - H(Q)|$ based on $|P-Q|$
can be found in \cite[Theorem~17.3.3]{Cover_Thomas}, \cite{HoY_IT2010},
\cite{Prelov_PPI2008}, \cite{Sason_IT2013}, \cite[Section~1.7]{Verdu_book} and \cite{Zhang_IT2007}.

\section{Extension of Theorem~\ref{thm: reksnip - improved} to R\'{e}nyi Divergences}
\label{section: RD}

\begin{definition}
{\em The R\'{e}nyi divergence of order $\alpha \in [0, \infty]$ from $P$ to $Q$ is defined
for $\alpha \in (0,1) \cup (1, \infty)$ as
\begin{align}
& D_{\alpha}(P || Q) \triangleq \frac{1}{\alpha-1} \; \log \left( \sum_{a \in \mathcal{A}}
P^{\alpha}(a) \, Q^{1-\alpha}(a) \right).
\label{eq: Renyi divergence}
\end{align}
Recall that $D_{1}(P\|Q) \triangleq D(P\|Q)$ is defined to be the analytic extension
of $D_{\alpha}(P || Q)$ at $\alpha=1$ (if $D(P||Q) < \infty$, L'H\^{o}pital's rule
gives that $D(P||Q) = \lim_{\alpha \uparrow 1} D_{\alpha}(P || Q)$).
The extreme cases of $\alpha = 0, \infty$ are defined as follows:
\begin{itemize}
\item If $\alpha = 0$ then $D_0(P||Q) = -\log Q(\text{Support}(P))$,
\item If $\alpha = +\infty$ then
$$D_{\infty}(P||Q) = \log \left(\sup_{a \in \set{A}} \frac{P(a)}{Q(a)}\right).$$
\end{itemize}}
\end{definition}

Pinsker's inequality was extended by Gilardoni \cite{Gilardoni10}
for a R\'{e}nyi divergence of order $\alpha \in (0,1]$
(see also \cite[Theorem~30]{ErvenH14}), and it gets the form
\begin{equation*}
\tfrac{\alpha}{2} \, |P-Q|^2 \, \log e \leq D_{\alpha}(P \| Q).
\end{equation*}
A tight lower bound on the R\'{e}nyi divergence of order
$\alpha > 0$ as a function of the total variation distance
is given in \cite{Sason_ISIT15}, which is consistent with
Vajda's tight lower bound for $f$-divergences in \cite[Theorem~3]{Vajda_1972}.

Motivated by these findings, we extend the upper bound on the relative entropy
in Theorem~\ref{thm: reksnip - improved} to R\'{e}nyi divergences of an arbitrary order.

\vspace*{0.2cm}
\begin{theorem}\label{thm: ub-RD-TV-FS}
{\em
Assume that $P, Q$ are strictly positive with minimum masses
denoted by $P_{\min}$ and $Q_{\min}$, respectively.
Let $\beta_1$ and $\beta_2$ be given in \eqref{eq: beta1} and \eqref{eq: beta2},
respectively, and abbreviate $\delta \triangleq \tfrac12 |P-Q| \in [0,1]$.
Then, the R\'{e}nyi divergence of order $\alpha \in [0,\infty]$ satisfies
\begin{align}  \label{eq: ub-RD-TV-FS}
& D_{\alpha}(P \| Q) \nonumber \\
& \leq \left\{
\begin{array}{ll}
f_1, & \mbox{$\alpha \in (2, \infty]$} \\[0.3cm]
f_2, & \mbox{$\alpha \in [1,2]$} \\[0.3cm]
\min \left\{ f_2, f_3, f_4 \right\},
& \mbox{$\alpha \in \bigl(\tfrac12, 1 \bigr)$} \\[0.3cm]
\min \left\{ 2 \log\left(\frac1{1-\delta} \right), f_2, f_3, f_4 \right\},
& \mbox{$\alpha \in \bigl[0, \tfrac12\bigr]$}
\end{array}
\right.
\end{align}
where, for $\alpha \in [0, \infty]$,
\begin{align} \label{eq: f1}
& f_1(\alpha, \beta_1, \delta) \nonumber \\
& \triangleq \left\{
\begin{array}{ll}
\frac{1}{\alpha-1} \; \log\left(1 + \frac{\delta (\beta_1^{1-\alpha}-1)}{1-\beta_1}\right)
& \alpha \in [0,1) \cup (1, \infty) \\[0.3cm]
\frac{\delta}{1-\beta_1} \; \log \frac1{\beta_1}, & \alpha = 1, \\[0.3cm]
\log \frac1{\beta_1}, & \alpha = \infty \\[0.3cm]
\end{array}
\right.
\end{align}
for $\alpha \in [0,2]$
\begin{align}
\label{eq: f2}
& f_2(\alpha, \beta_1, Q_{\min}, \delta) \nonumber \\[0.2cm]
& \triangleq \min \left\{f_1(\alpha, \beta_1, \delta), \; \log\left(1+\frac{2\delta^2}{Q_{\min}}\right) \right\}
\end{align}
and, for $\alpha \in [0,1)$, $f_3$ and $f_4$ are given by
\begin{align}
\label{eq: f3}
&  f_3(\alpha, P_{\min}, \beta_1, \delta) \nonumber \\[0.2cm]
& \triangleq
\left(\frac{\alpha}{1-\alpha}\right)
\left[ \log \left(1 + \frac{2 \delta^2}{P_{\min}} \right)
- 2 \beta_1 \delta^2 \, \log e  \right], \\[0.3cm]
\label{eq: f4}
&  f_4(\beta_2, Q_{\min}, \delta) \nonumber \\
& \triangleq \min\left\{ \log \left(1 + \frac{2 \delta^2}{Q_{\min}} \right)
- 2 \beta_2 \delta^2 \, \log e , \right.
\nonumber \\[0.1cm]
& \hspace*{1.2cm} \left. \log \left(1 + \frac{\min\{\delta, 2\delta^2\}}{Q_{\min}} \right) \right\}.
\end{align}}
\end{theorem}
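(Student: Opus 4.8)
The plan is to reduce every entry of \eqref{eq: ub-RD-TV-FS} to an estimate on the Hellinger-type sum $\sum_{a} P^{\alpha}(a)\,Q^{1-\alpha}(a)$ from \eqref{eq: Renyi divergence}, and to orchestrate three structural facts: the monotonicity of $\alpha\mapsto D_{\alpha}(P\|Q)$ (already invoked in the proof of Theorem~\ref{thm: reksnip - improved}); the skew-symmetry identity $D_{\alpha}(P\|Q)=\tfrac{\alpha}{1-\alpha}\,D_{1-\alpha}(Q\|P)$ for $\alpha\in(0,1)$, which is immediate from \eqref{eq: Renyi divergence}; and Theorem~\ref{thm: reksnip - improved} itself, applied to either ordering of the pair. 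Writing $t_a\triangleq Q(a)/P(a)\ge\beta_1$ with $\beta_1$ as in \eqref{eq: beta1}, the identities $\mathbb{E}_{P}[t]=\sum_{a} Q(a)=1$ and $\mathbb{E}_{P}[(1-t)^{+}]=\sum_{a:\,Q(a)<P(a)}\bigl(P(a)-Q(a)\bigr)=\delta$ (half the total variation) are the only distributional facts I will need to produce $f_1$.

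First I would establish $f_1$ in \eqref{eq: f1}. The map $\phi(t)=t^{1-\alpha}$ is convex for $\alpha>1$ and concave for $\alpha\in(0,1)$, so comparing $\phi$ with the chord joining $(\beta_1,\beta_1^{1-\alpha})$ and $(1,1)$ yields the single pointwise bound
\[
\phi(t)\ \lessgtr\ 1+\frac{\beta_1^{1-\alpha}-1}{1-\beta_1}\,(1-t)^{+},\qquad t\ge\beta_1,
\]
valid with ``$\le$'' when $\alpha>1$ and with ``$\ge$'' when $\alpha\in(0,1)$ (equality holds at $t=\beta_1$ and $t=1$, while for $t>1$ one merely compares $\phi(t)$ with $1$). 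Taking $\mathbb{E}_{P}$ and using $\mathbb{E}_{P}[(1-t)^{+}]=\delta$ gives $\sum_{a} P^{\alpha}Q^{1-\alpha}\lessgtr 1+\frac{\delta(\beta_1^{1-\alpha}-1)}{1-\beta_1}$, and applying $\frac{1}{\alpha-1}\log(\cdot)$ — whose sign flips across $\alpha=1$ in exactly the way that undoes the flip of the chord inequality — produces $f_1$. The boundary orders $\alpha=1$ and $\alpha=\infty$ follow by continuity (by L'H\^{o}pital's rule) and by the direct evaluation $D_{\infty}(P\|Q)=\log(1/\beta_1)$. This settles $\alpha\in(2,\infty]$ and supplies the $f_1$ term inside $f_2$.

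The remaining ingredients come from the three structural facts. (i) For $\alpha\le2$, monotonicity gives $D_{\alpha}(P\|Q)\le D_{2}(P\|Q)=\log\bigl(1+\chi^{2}(P\|Q)\bigr)$; with $\chi^{2}(P\|Q)\le 2\delta^{2}/Q_{\min}$ from \eqref{eq: ub chi-square divergence} this is the second entry of $f_2$ in \eqref{eq: f2}, and together with a companion elementary bound on $\chi^{2}(P\|Q)$ (sharper for larger $\delta$) it yields the second alternative inside $f_4$ in \eqref{eq: f4}. (ii) For $\alpha\in(0,1)$ the skew-symmetry identity and monotonicity give $D_{\alpha}(P\|Q)=\tfrac{\alpha}{1-\alpha}D_{1-\alpha}(Q\|P)\le\tfrac{\alpha}{1-\alpha}D(Q\|P)$, and bounding $D(Q\|P)$ by Theorem~\ref{thm: reksnip - improved} applied to the pair $(Q,P)$ — under which $Q_{\min}\mapsto P_{\min}$ and $\beta_2\mapsto\beta_1$ — produces exactly $f_3$ in \eqref{eq: f3}. (iii) For $\alpha<1$, monotonicity gives $D_{\alpha}(P\|Q)\le D(P\|Q)$, so \eqref{eq: UB-RE-FS1} is the first alternative of $f_4$. (iv) For $\alpha\le\tfrac12$, monotonicity gives $D_{\alpha}(P\|Q)\le D_{1/2}(P\|Q)=-2\log\sum_{a}\sqrt{P(a)Q(a)}$, and since $\sqrt{P(a)Q(a)}\ge\min\{P(a),Q(a)\}$ one has $\sum_{a}\sqrt{P(a)Q(a)}\ge\sum_{a}\min\{P(a),Q(a)\}=1-\delta$, giving the term $2\log\frac{1}{1-\delta}$.

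It then remains to read off \eqref{eq: ub-RD-TV-FS} by retaining, in each range of $\alpha$, the minimum of whichever of the above bounds applies there. I expect the main obstacle to be bookkeeping rather than any single hard estimate: one must track the direction of the chord inequality as $\phi$ passes from convex to concave at $\alpha=1$ and confirm it is cancelled by the sign change of $\frac{1}{\alpha-1}$; verify that the positivity of $1+\frac{\delta(\beta_1^{1-\alpha}-1)}{1-\beta_1}$ keeps every logarithm well defined on $\delta\in[0,1]$; pin down the second elementary $\chi^{2}$-bound responsible for the $\min\{\delta,2\delta^{2}\}$ term of $f_4$; and check, by continuity, that the piecewise formula is consistent at the transitional orders $\alpha\in\{\tfrac12,1,2,\infty\}$.
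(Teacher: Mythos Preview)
The paper does not prove this theorem in-text; its proof is the one-line deferral ``See \cite[Section~7.C]{ISSV15}.'' So there is no in-paper argument to compare against, and your sketch would in fact supply what the paper omits. The architecture you lay out is correct: the chord/tangent bound on $t\mapsto t^{1-\alpha}$ over $[\beta_1,\infty)$ with the $(1-t)^{+}$ truncation gives $f_1$ for all $\alpha\neq 1$ (and the sign flip at $\alpha=1$ does cancel the convex/concave flip, as you anticipate); monotonicity plus $D_2=\log(1+\chi^2)$ and \eqref{eq: ub chi-square divergence} gives the second term of $f_2$; the skew-symmetry $D_\alpha(P\|Q)=\tfrac{\alpha}{1-\alpha}D_{1-\alpha}(Q\|P)$ followed by $D_{1-\alpha}(Q\|P)\le D(Q\|P)$ and Theorem~\ref{thm: reksnip - improved} applied to $(Q,P)$ gives $f_3$; monotonicity plus \eqref{eq: UB-RE-FS1} gives the first term of $f_4$; and $\sqrt{PQ}\ge\min\{P,Q\}$ gives the $2\log\frac{1}{1-\delta}$ bound at $\alpha=\tfrac12$.

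The one place where your plan needs correcting rather than merely ``pinning down'' is the $\delta$-branch of the second term in $f_4$. It does \emph{not} come from a companion $\chi^2$ bound: splitting the $\chi^2$ sum over $\{P>Q\}$ and $\{P<Q\}$ yields only $\chi^2(P\|Q)\le\delta\bigl(1+Q_{\min}^{-1}\bigr)$, which is too weak by an additive $\delta$. The clean route is via $D_\infty$: if $a^{\star}$ attains $\max_a P(a)/Q(a)$, then
\[
\frac{1}{\beta_1}-1=\frac{P(a^{\star})-Q(a^{\star})}{Q(a^{\star})}\le\frac{\delta}{Q_{\min}}
\]
by \eqref{eq: ub2}, hence $D_\infty(P\|Q)=\log\frac{1}{\beta_1}\le\log\bigl(1+\tfrac{\delta}{Q_{\min}}\bigr)$, which is precisely \eqref{eq: simple ub}. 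Monotonicity then pushes this down to all $\alpha$, in particular to $\alpha<1$ where $f_4$ lives. With this substitution, and the routine continuity and positivity checks you already enumerate, the argument is complete.
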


\vspace*{0.3cm}
\begin{proof}
See \cite[Section~7.C]{ISSV15}.
\end{proof}

\begin{remark}
{\em A simple bound, albeit looser than the one in Theorem~\ref{thm: ub-RD-TV-FS} is
\begin{align} \label{eq: simple ub}
D_{\alpha}(P\|Q) \leq \log \left(1 + \frac{|P-Q|}{2 Q_{\min}} \right)
\end{align}
which is asymptotically tight as $\alpha \to \infty$ in the case of a binary alphabet
with equiprobable $Q$.}
\end{remark}

\begin{example}
{\em Figure~\ref{figure:compare_ub_RD_finite_graph1} illustrates the bound in
\eqref{eq: f1}, which is valid for all $\alpha \in [0, \infty]$ (see
\cite[Theorem~23]{ISSV15}), and the upper bounds of Theorem~\ref{thm: ub-RD-TV-FS} 
in the case of binary alphabets.
\begin{figure}[here!]
\centerline{\includegraphics[width=9.8cm]{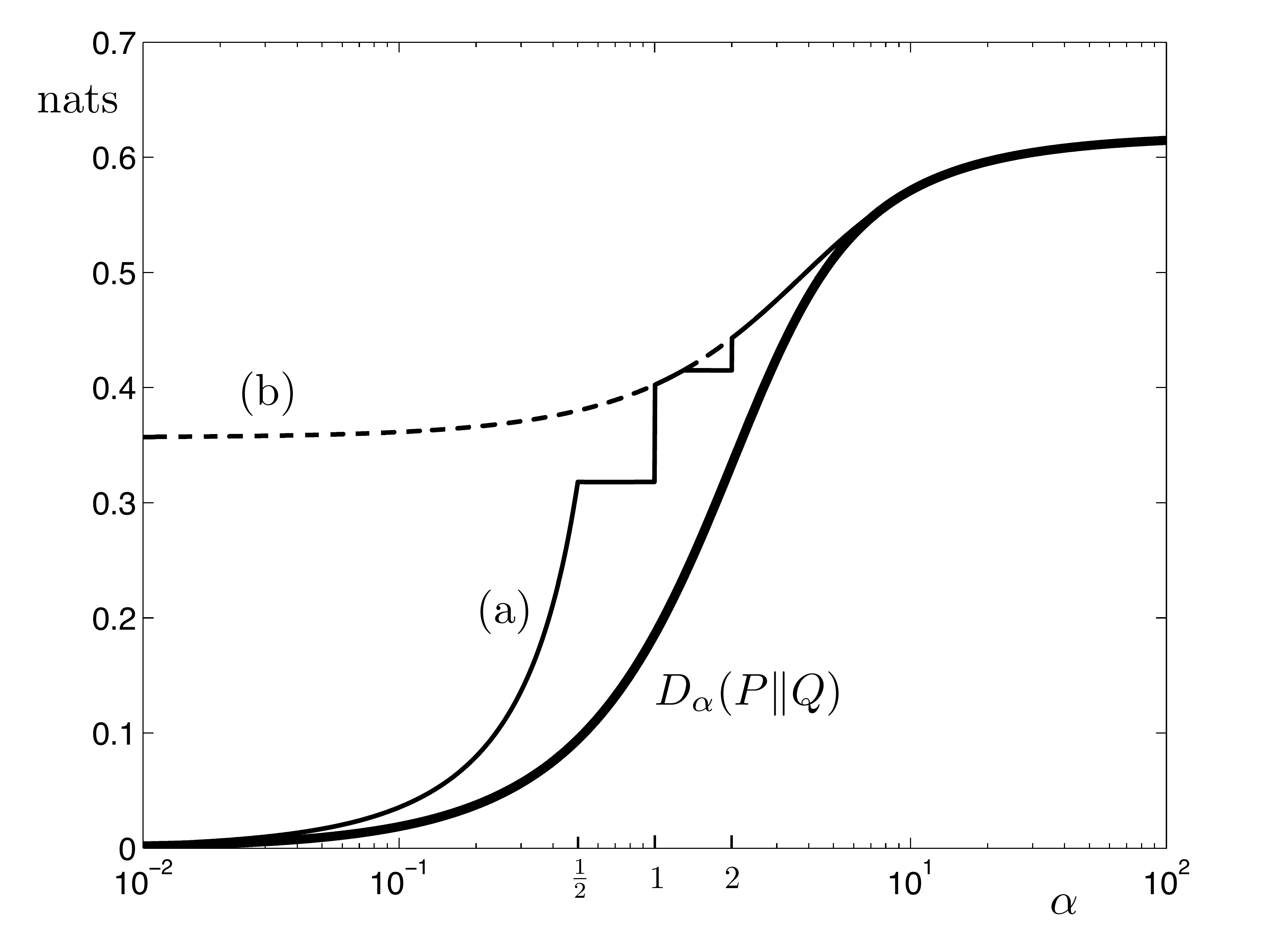}}
\caption{\label{figure:compare_ub_RD_finite_graph1}
The R\'{e}nyi divergence $D_{\alpha}(P\|Q)$ for $P$ and $Q$
which are defined on a binary alphabet with $P(0)=Q(1)=0.65$,
compared to (a) its upper bound in \eqref{eq: ub-RD-TV-FS},
and (b) its upper bound in \eqref{eq: f1} (see \cite[Theorem~23]{ISSV15}).
The two bounds coincide here when $\alpha \in (1, 1.291) \cup (2, \infty)$.}
\end{figure}}
\end{example}

\section{Summary}
We derive in this paper some ``reverse Pinsker inequalities" for probability measures
$P \ll Q$ defined on a common finite set, which provide lower bounds on the total
variation distance $P-Q$ as a function of the relative entropy $D(P\|Q)$ under the
assumption of a bounded relative information or $Q_{\min} > 0$. More general results
for an arbitrary alphabet are available in \cite[Section~5]{ISSV15}.

In \cite{ISSV15}, we study bounds among various $f$-divergences, dealing with arbitrary
alphabets and deriving bounds on the ratios of various distance measures.
New expressions of the R\'{e}nyi divergence in terms of the relative information
spectrum are derived, leading to upper and lower bounds on the R\'{e}nyi divergence
in terms of the variational distance.

\section*{Acknowledgment}
The work of I. Sason has been supported by the Israeli Science Foundation (ISF) under
Grant 12/12, and the work of S. Verd\'{u} has been supported by the US National Science
Foundation under Grant CCF-1016625, and in part by the Center for Science of Information,
an NSF Science and Technology Center under Grant CCF-0939370.


\begin{thebibliography}{99}
\bibitem{AE1}
K. M. R. Audenaert and J. Eisert, ``Continuity bounds on the quantum relative entropy,''
{\em Journal of Mathematical Physics}, vol.~46, paper~102104, October 2005.

\bibitem{AE2}
K. M. R. Audenaert and J. Eisert, ``Continuity bounds on the quantum relative entropy - II,''
{\em Journal of Mathematical Physics}, vol.~52, paper~112201, November 2011.

\bibitem{BochererG_arXiv}
G. B\"{o}cherer and B. C. Geiger, ``Optimal quantization for distribution synthesis,'' March 2015.
Available at \url{http://arxiv.org/abs/1307.6843}.

\bibitem{BretagnolleH79}
J. Bretagnolle and C. Huber, ``Estimation des densit\'{e}s: risque minimax,"
{\em Probability Theory and Related Fields}, vol.~47, no.~2, pp. 119--137, 1979.

\bibitem{Cover_Thomas}
T. M. Cover and J. A. Thomas, {\em Elements of Information Theory}, second edition,
John Wiley \& Sons, 2006.

\bibitem{CsiszarT_IT06}
I. Csisz\'{a}r and Z. Talata, ``Context tree estimation for not necessarily
finite memory processes, via BIC and MDL,'' {\em IEEE Trans. on Information
Theory}, vol.~52, no.~3, pp.~1007--1016, March 2006.

\newpage

\bibitem{Dragomir06}
S. S. Dragomir, ``Bounds for the normalized Jensen functional,'' {\em Bulletin of the Australian
Mathematical Society}, vol.~74, no.~3, pp.~471--478, 2006.

\bibitem{ErvenH14}
T. van Erven and P. Harremo{\"{e}}s, ``R\'{e}nyi divergence and Kullback-Leibler divergence,''
{\em IEEE Trans. on Information Theory}, vol.~60, no.~7, pp.~3797--3820, July 2014.

\bibitem{FedotovHT_IT03}
A. A. Fedotov, P. Harremo{\"{e}}s and F. Tops{\o}e, ``Refinements of Pinsker's inequality,''
{\em IEEE Trans. on Information Theory}, vol.~49, no.~6, pp.~1491--1498, June 2003.

\bibitem{Gilardoni10}
G. L. Gilardoni, ``On Pinsker's and Vajda's type inequalities for Csisz\'{a}r's
$f$-divergences,'' {\em IEEE Trans. on Information Theory}, vol.~56, no.~11,
pp.~5377--5386, November 2010.

\bibitem{HoY_IT2010}
S. W. Ho and R. W. Yeung, ``The interplay between entropy and variational
distance,'' {\em IEEE Trans. on Information Theory}, vol.~56, no.~12,
pp.~5906--5929, December 2010.

\bibitem{KostinaV15}
V. Kostina and S. Verd\'{u}, ``Channels with cost constraints: strong converse
and dispersion,'' to appear in the {\em IEEE Trans. on Information Theory},
vol.~61, no. 5, May 2015.

\bibitem{Moser_ISIT2015}
M. Kraj\u{c}i, C. F. Liu, L. Mike\u{s} and S. M. Moser, ``Performance analysis of Fano coding,''
{\em Proceedings of the IEEE 2015 International Symposium on Information Theory}, Hong Kong,
June~14--19, 2015.

\bibitem{OrdentlichW_IT2005}
E. Ordentlich and M. J. Weinberger, ``A distribution dependent refinement
of Pinsker's inequality,'' {\em IEEE Trans. on Information Theory}, vol.~51,
no.~5, pp.~1836--1840, May 2005.

\bibitem{Pinsker60}
M. S. Pinsker, {\em Information and Information Stability of Random Variables and Random
Processes}, San-Fransisco: Holden-Day, 1964, originally published in Russian in 1960.

\bibitem{Prelov_PPI2008}
V. V. Prelov and E. C. van der Meulen, ``Mutual information, variation, and
Fano's inequality,'' {\em Problems of Information Transmission}, vol.~44,
no.~3, pp.~185--197, September 2008.

\bibitem{ReidW11}
M. D. Reid and R. C. Williamson, ``Information, divergence and risk for binary experiments,''
{\em Journal of Machine Learning Research}, vol.~12, no.~3, pp.~731--817, March 2011.

\bibitem{Sason_IT2013}
I. Sason, ``Entropy bounds for discrete random variables via maximal coupling,''
{\em IEEE Trans. on Information Theory}, vol.~59, no.~11, pp.~7118--7131, November 2013.

\bibitem{Sason_ISIT15}
I. Sason, ``On the R\'{e}nyi divergence and the joint range of relative entropies,''
{\em Proceedings of the 2015 IEEE International Symposium on Information Theory},
pp.~1610--1614, Hong Kong, June 14--19, 2015.

\bibitem{ISSV15}
I. Sason and S. Verd\'{u}, ``Bounds among $f$-divergences,'' submitted to the {\em IEEE Trans. on Information
Theory}, July 2015. [Online]. Available at \url{http://arxiv.org/abs/1508.00335}.

\bibitem{TomamichelT_IT13}
M. Tomamichel and V. Y. F. Tan, ``A tight upper bound for the third-order asymptotics for most discrete
memoryless channels,'' {\em IEEE Trans. on Information Theory}, vol.~59, no.~11, pp.~7041--7051, November 2013.

\bibitem{Vajda_IT1970}
I. Vajda, ``Note on discrimination information and variation,'' {\em IEEE
Trans. on Information Theory}, vol.~16, no.~6, pp.~771--773, November~1970.

\bibitem{Vajda_1972}
I. Vajda, ``On $f$-divergence and singularity of probability measures,''
{\em Periodica Mathematica Hungarica}, vol.~2, no.~1--4, pp.~223--234, 1972.

\bibitem{Vapnik98}
V. N. Vapnik, {\em Statistical Learning Theory}, John Wiley \& Sons, 1998.

\bibitem{Verdu_ITA14}
S. Verd\'{u}, ``Total variation distance and the distribution of the relative information,''
{\em Proceedings of the Information Theory and Applications Workshop},
pp.~499--501, San-Diego, California, USA, February 2014.

\bibitem{Verdu_book}
S. Verd\'{u}, {\em Information Theory}, in preparation.

\bibitem{Zhang_IT2007}
Z. Zhang, ``Estimating mutual information via Kolmogorov distance,''
{\em IEEE Trans. on Information Theory}, vol.~53, no.~9, pp.~3280--3282,
September~2007.

\end{thebibliography}
\end{document}